\title{On Regret of Parametric Mismatch in  Minimum Mean Square Error Estimation}
\author{Majid Fozunbal\\
        Hewlett-Packard Laboratories\\
        Palo Alto, CA 94304}
\def\r{\rho}
\def\s{\sigma}
\def\f{\phi}
\def\mcal{\mathcal}
\DeclareMathAlphabet{\mathpzc}{OT1}{pzc}{m}{it}
\def\mbb{\mathbb} 
\newcommand{\abs}[1]{\lvert#1\rvert}
\renewcommand{\abs}[1]{\left| #1 \right|}
\newcommand{\Exp}[1]{{\mathbb E } \left [ #1 \right ]}
\newcommand{\Expi}[2]{{\mathbb E }_{#1} \left [ #2 \right ]}
\newcommand{\kullback}[2]{D(#1 \| #2)}
\newcommand{\probdistn}[1]{{P}_{#1}}
\newcommand{\probdist}[2]{{P}_{#1}(#2)}
\newtheorem{proposition}{Proposition}[{section}] 
\newtheorem{lemma}{Lemma}[{section}] 
\newtheorem{corollary}{Corollary}[{section}] 
\newtheorem{example}{Example}[{section}] %
\newcommand{\regret}[1]{R(#1)}
\newcommand{\rregret}[1]{RR(#1)}
\newcommand{\dfi}[2]{f_{#1}(#2)}
\newcommand{\fisher}[1]{J(#1)}
\newcommand{\figfontsmall}{.7}
\newcommand{\figfontlarge}{.9}
\begin{document}
\maketitle
\begin{abstract}
    This paper studies the effect of parametric mismatch  in  minimum mean
    square error (MMSE) estimation. In particular, we consider the problem of estimating the input signal
    from the output of an additive white Gaussian channel whose gain is fixed, but unknown.
    The input distribution is known, and the estimation process consists of two algorithms. First, a channel
    estimator blindly estimates the channel gain using past observations. Second, a mismatched
    MMSE estimator, optimized for the estimated channel gain, estimates the input signal.
    We analyze the {\em regret}, i.e., the additional mean square error, that is raised in this process.
    We derive upper-bounds on both absolute and relative regrets. Bounds are expressed in terms of
    the Fisher information. We also study regret for unbiased, efficient channel estimators, and derive
    a simple trade-off between Fisher information and relative regret.
    This trade-off shows that the product of a certain function of
    relative regret and Fisher information equals  the signal-to-noise ratio, independent of
    the input distribution.
    The trade-off  relation implies
    that higher Fisher information results to smaller~expected~relative~regret.
\end{abstract}

    \section{Introduction}
        Consider an application that you are given the output of a system, and you seek
        to recover the input of the system. You know that the system is noisy, e.g., it adds white Gaussian noise
        to the output. You know the
        distribution of the input, but you do not know the system parameters. Problems
        of this sort arise in different applications in signal processing and communication systems.
        Some examples include blind deconvolution \cite{Donoho:Deconvolution}, dereverberation \cite{Majid:ICASSP:10},
        denoising \cite{Weissman:DUDE:Uncetainty}, and mismatch decoding \cite{Mismatch:Lapidoth}.
        These applications differ in their fundamental models, fidelity criteria,
        and methodologies.  However, they have one thing in common: they all suffer from parametric mismatch in
        recovering the input signals.

        The motivation of this work is blind deconvolution and dereverberation applications. Linear time-invariant
        channels serve as common models in these applications. As the input signal passes through these channels, it convolves
        with the unknown finite-impulse response (FIR) of the channel, and it adds with additive white Gaussian noise (of known variance).
        Recovering the input signals from the noisy output could be impossible even with perfect knowledge about the
        channel response. This is out of our scope. Instead, we aim to study the penalty and performance degradation that
        is specifically caused by the lack of knowledge about the channel response.

        We benchmark performance
        against that of perfect channel knowledge scenario. We are concerned about
        issues such as required sample complexity or training in channel estimation to bring performance of input estimation
        within a desired range. As a counterpart problem in communication systems, one may think of block fading channels
        and the trade-off between accuracy of channel estimation and performance of decoding~\cite{Hassibi:Training}. Note
        that channel estimation in our case is blind as we have no control of the source.

        As a first step to address these problems, in this work, we focus on the most basic system in which
        the unknown channel is just a single gain.  We expect that the results and intuitions
        of this work will shed lights on the analysis of generic FIR channels.\footnote{Analogous to the case between the analysis of flat-fading and the analysis of frequency-selective channels.}
        In treating the problem, we consider an estimation process that consists of two algorithms. First, a {\em channel
        estimator} blindly estimates the channel gain using past output observations. Second, a mismatched
        {\em minimum mean square error} (MMSE) estimator, optimized for the estimated channel gain, estimates the input signal.
        Figure~\ref{fig:system} illustrates the building blocks of this process. Due to estimation error in
        channel estimation, the MMSE estimator that is used in recovering the input signal results in a
        mean square error that is larger than that of the ideal MMSE estimator.
        We call this additional error as {\em regret}, and
        we derive novel upper-bounds on both absolute and relative regrets. The bounds are simple and
        demonstrate interesting connections to  the Fisher information. To this end, one might attempt
        to exploit the results of \cite{Sergio:Mismatched} and
        \cite{Guo:RelativeEntropy:Score} to derive other alternative bounds.

        We also quantify  regret for unbiased, {\em efficient} channel estimators. Since these
        estimators achieve Cramer-Rao bound, they result in  a simple trade-off relation between Fisher information and relative regret.
        This trade-off relation expresses that the product of a certain function of
        relative regret and Fisher information is equivalent to the signal-to-noise ratio, independent of the input distribution.
        Trade-off suggests that higher Fisher information results to smaller expected relative regret.
        Although, intuitively, this may seem expected, simplicity of the trade-off relation makes it worthwhile.

    \section{Setup}
        Consider a linear dynamic system
        \begin{eqnarray}
            Y_n = aX_{n}+ V_n
            \label{eqn:dynamic}
        \end{eqnarray}
        in which $\{V_n\}$ is an independent, identically, distributed (i.i.d.) Gaussian noise
        such that $V_n\sim\mcal N(0,\s^2_v)$. The input $X_{n}$
        is an i.i.d. process whose distribution is known to be $\probdist{}{X}$.
        Parameter $a \in \mbb R^+$ is a fixed,  unknown channel gain. It results to
        a derived parametric family of probability measures $\probdist{a}{X,Y}$,
        the joint distribution of $X$ and $Y$, governing the system dynamic \eqref{eqn:dynamic}.
        The objective is to observe a realization of the output process
        \[Y^n=(Y_1,Y_2, \cdots, Y_n)\] and estimate the realization of the
        underlying input process, i.e.,
        \[X^n=(X_1, X_2, \cdots, X_n).\]

        Let $\mcal X =\mbb R$ and $\mcal Y = \mbb R$ denote the input and output spaces, respectively.
        We consider memoryless input estimators, e.g., $\phi \colon \mcal Y \to \mcal X$ where
        $\phi(Y_n)$ is an estimate for $X_n$.  The {\em mean square error} (MSE) for  $\f$ is defined
        \begin{eqnarray}
             \Exp{(X-\phi(Y))^2} = \int (x-\phi(y))^2 d \probdistn{a}.
            \label{eqn:mse}
        \end{eqnarray}
        In Eq. \eqref{eqn:mse} and henceforth we follow the convention that
        unsubscribed expectations are measured according to $\probdist{a}{X,Y}$.
        Moreover, we use concise notations like $\probdistn{a}=\probdist{a}{X,Y}$
        and $\probdistn{a|y}=\probdist{a}{X|Y=y}$ to denote joint and conditional
        distributions, respectively.

        One seeks to find an estimator that minimizes MSE~\eqref{eqn:mse}. The main challenge, however, is that $a$ and
        $\probdistn{a}$ are unknown. If we had
        oracle knowledge about $a$, the MMSE estimator for $X$
        is defined
        \begin{align}
             \phi_a(y) = \Exp{X|Y=y}.
             \label{eqn:optimal:mmse:estimator}
        \end{align}
        for an observation $Y=y$. Any other estimator $\f$ results to additional
        error that we call it {\em regret}. The motivation for this name is that
        it measures degradation on performance, an impact caused by imprecise knowledge about $a$.

        In this paper, we assess regret for a special class of mismatched estimators. Namely,
        we consider an estimation process that is depicted in Figure \ref{fig:system}. A channel
        estimation works in parallel with an MMSE input estimation as follows. At time instance $n$,
        a channel estimator  finds an  estimate $\hat{a}=\hat{a}_n$ of $a$ using the observed values $Y^{n-1}$.
        Then, it uses the optimal estimator of $\probdist{\hat{a}}{X, Y}$
        to compute
        \begin{align}
            \phi_{\hat{a}}(y_n) = \Expi{\hat{a}}{X_n|Y=y_n}
            \label{eqn:mismatch:estimator}
        \end{align}
        as an estimate for $X_n$. Function  $\phi_{\hat{a}}$ is a mismatch MMSE
        estimator that causes regret when used in place of $\f_a$. In the following sections, we
        study two types of regret: {\em absolute regret} and {\em relative regret}.

            \begin{figure}[t]
                \begin{center}
                    \psfrag{X}[Bc][B1][\figfontsmall][0]{$X_n$}
                    \psfrag{Y}[Bc][B1][\figfontsmall][0]{$Y_n$}
                    \psfrag{Yn}[Bc][B1][\figfontsmall][0]{$Y^{n-1}$}
                    \psfrag{V}[Bc][B1][\figfontsmall][0]{$V_n$}
                    \psfrag{a}[Bc][B1][\figfontsmall][0]{$a$}
                    \psfrag{aX}[Bc][B1][\figfontsmall][0]{$aX_n$}
                    \psfrag{ah}[Bc][B1][\figfontsmall][0]{$\hat{a}$}
                    \psfrag{Xh}[Bc][B1][\figfontsmall][0]{$\hat{X}_n$}
                    \psfrag{C}[Bc][B1][\figfontsmall][0]{channel}
                    \psfrag{E}[Bc][B1][\figfontsmall][0]{estimation}
                    \psfrag{I}[Bc][B1][\figfontsmall][0]{MMSE}
                    \psfrag{e}[Bc][B1][\figfontsmall][0]{estimation process}
                    \includegraphics[width=.95\columnwidth]{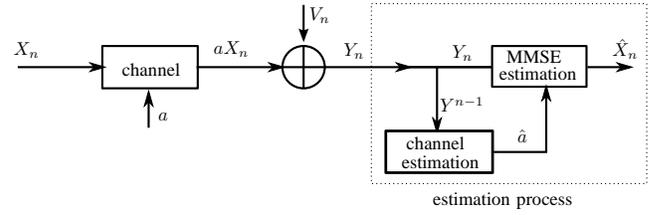}
                \end{center}
                \caption{Figure depicts the building blocks of the system setup.
                The estimation process consists of two individual algorithms: 1) a channel estimation algorithm
                that blindly estimates $\hat{a}$ as an estimate for $a$, 2) a mismatch MMSE estimation algorithm, optimized for $\hat{a}$, that recovers $X_n$.}
                \label{fig:system}
            \end{figure}

    \section{Absolute Regret}
        \subsection{Deviation Analysis}
        The absolute regret corresponding to $\phi_{\hat{a}}$ is
        \begin{align}
            \regret{\hat{a},a} = \Exp{(X-\phi_{\hat{a}}(Y))^2} - \Exp{(X-\phi_a(Y))^2}.
        \end{align}
        Application of orthogonality principle results to
        \begin{align}
            \regret{\hat{a},a} = \Exp{(\phi_{\hat{a}}(Y)  - \phi_a(Y))^2}.
            \label{eqn:regret}
        \end{align}
        Eq. \eqref{eqn:regret} quantifies the absolute {\em regret} of using $\phi_{\hat{a}}$ instead of $\phi_a$. The following lemma
        states and proves an upper-bound on  \eqref{eqn:regret}.
        \begin{lemma}
            For every $\hat{a}$, the following holds true
            \begin{align}
                 \nonumber \regret{\hat{a},a}  \leq (\hat{a}-a)^2 & \Exp{\left(6 \s^2_x + 8 \frac{Y^2}{{a}^2}\right)  \fisher{X;a||Y}}\\&+o(\hat{a}-a)^2
                \label{eqn:lemma:upperbound:regret}
            \end{align}
            in which the expectation is with respect to $Y$,
            and
            \begin{eqnarray}
                 \fisher{X;a||Y}  \triangleq \Expi{}{\big(\nabla \ln \dfi{a}{X|Y} \big)^2 |Y}
                 \label{eqn:lemma:fisher:random}
            \end{eqnarray}
            is the Fisher information of $X$ relative to $a$, conditioned on $Y$.
            Here, $\dfi{a}{X|Y}$ is  the density of $\probdistn{{a}|Y}$.
            \label{lemma:upperbound:regret}
        \end{lemma}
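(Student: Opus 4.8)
The plan is to start from the identity $\regret{\hat a,a}=\Exp{(\phi_{\hat a}(Y)-\phi_a(Y))^2}$ and Taylor-expand the integrand in the channel parameter around the true value $a$. Writing $g(y,b)\triangleq\phi_b(y)=\Expi{b}{X\mid Y=y}$, we have $\phi_{\hat a}(y)-\phi_a(y)=(\hat a-a)\,\partial_b g(y,b)\big|_{b=a}+o(\hat a-a)$ pointwise, so that $\regret{\hat a,a}=(\hat a-a)^2\,\Exp{\big(\partial_b g(Y,b)\big|_{b=a}\big)^2}+o(\hat a-a)^2$, provided the remainder is uniformly controlled (see the obstacle below). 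Thus the whole problem reduces to bounding the single quantity $\Exp{\big(\partial_b \Expi{b}{X\mid Y}\big|_{b=a}\big)^2}$ by $\Exp{\left(6\s^2_x+8Y^2/a^2\right)\fisher{X;a\|Y}}$.

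Next I would compute $\partial_b \Expi{b}{X\mid Y=y}$ explicitly. Since $\Expi{b}{X\mid Y=y}=\int x\,\dfi{b}{x\mid y}\,dx$, differentiating under the integral gives $\partial_b\Expi{b}{X\mid Y=y}=\int x\,\big(\nabla\ln\dfi{b}{x\mid y}\big)\dfi{b}{x\mid y}\,dx=\Expi{b}{X\,S_b\mid Y=y}$, where $S_b\triangleq\nabla\ln\dfi{b}{X\mid Y}$ is the conditional score. Because the score has conditional mean zero, $\Expi{b}{S_b\mid Y=y}=0$, so in fact $\partial_b\Expi{b}{X\mid Y=y}=\Cov{X}{S_b}$ conditioned on $Y=y$, i.e. $\Expi{b}{(X-\phi_b(y))S_b\mid Y=y}$. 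Applying Cauchy--Schwarz to this conditional covariance yields $\big(\partial_b\phi_b(y)\big)^2\le \Expi{b}{(X-\phi_b(y))^2\mid Y=y}\cdot\Expi{b}{S_b^2\mid Y=y}=\Var{X\mid Y=y}\cdot\fisher{X;b\|Y=y}$ at $b=a$. To finish one needs $\Var{X\mid Y=y}\le 6\s^2_x+8y^2/a^2$; here I would use the model $Y=aX+V$ to write the conditional variance in terms of unconditional quantities — e.g. bound $\Var{X\mid Y=y}\le\Expi{}{X^2\mid Y=y}$ and split $X=(Y-V)/a$ so that $\Expi{}{X^2\mid Y=y}\le \tfrac{2}{a^2}(y^2+\Expi{}{V^2\mid Y=y})$, then control $\Expi{}{V^2\mid Y=y}$; the numerical constants $6$ and $8$ presumably come from a somewhat looser but cleaner version of this splitting (e.g. using $(p+q+r)^2\le 3(p^2+q^2+r^2)$-type inequalities and $\s_v^2\le a^2\s_x^2+\ldots$ relations, or bounding $V$-moments by a combination of $\s_x^2$ and $y^2/a^2$). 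Taking expectation over $Y$ then gives exactly \eqref{eqn:lemma:upperbound:regret}.

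The main obstacle I anticipate is not the pointwise covariance bound — that is a clean Cauchy--Schwarz — but rather \emph{justifying the Taylor expansion and the uniform smallness of the $o(\hat a-a)^2$ remainder}. This requires differentiating $\phi_b(y)$ twice in $b$ (or at least showing the first derivative is locally Lipschitz in $b$) and controlling the second-order term after integrating against $\probdistn{a}$; one must verify that the exchange of $\partial_b$ and $\int dx$ is legitimate (dominated convergence with an integrable envelope uniform in $b$ near $a$) and that $\Exp{\sup_{b\in\mcal N(a)}(\partial_b^2\phi_b(Y))^2}<\infty$. The Gaussian form of $\dfi{b}{x\mid y}$ (it is a Gaussian density in $x$ tilted by the prior $\probdistn{X}$, since $Y\mid X\sim\mcal N(aX,\s_v^2)$) makes all the relevant moments explicit and finite, so this is a matter of bookkeeping rather than a conceptual difficulty, but it is where the real work lies. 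A secondary, purely cosmetic issue is matching the exact constants $6$ and $8$; I would expect the author to absorb several crude moment inequalities into these, and I would not try to optimize them.
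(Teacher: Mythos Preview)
Your approach is correct in spirit but takes a genuinely different route from the paper. You Taylor-expand $\phi_b(y)$ in $b$, identify $\partial_b\phi_b(y)$ as the conditional covariance of $X$ with the score $S_b$, and then apply Cauchy--Schwarz to get $(\partial_b\phi_b(y))^2\le\Var{X\mid Y=y}\,\fisher{X;b\|Y=y}$. The paper instead bounds $(\phi_{\hat a}(y)-\phi_a(y))^2$ \emph{non-infinitesimally}: it writes the difference as $\int x\,(dP_{\hat a|y}-dP_{a|y})$, applies Cauchy--Schwarz after factoring $\sqrt{dP_{\hat a|y}}\pm\sqrt{dP_{a|y}}$, recognizes the Kakutani--Hellinger distance, and then uses $2r^2\le D$ to arrive at $2\big(\Expi{\hat a}{X^2|y}+\Expi{a}{X^2|y}\big)\kullback{P_{\hat a|y}}{P_{a|y}}$. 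Only \emph{then} is the KL expanded as $\tfrac{1}{2}(\hat a-a)^2\fisher{X;a\|Y=y}+o(\hat a-a)^2$. This explains the constants that puzzled you: the paper bounds each $\Expi{\cdot}{X^2|y}$ by $3\s_x^2+4y^2/(\cdot)^2$ (their Proposition on the conditional second moment), and summing the $\hat a$- and $a$-versions and absorbing $1/\hat a^2=1/a^2+o(1)$ yields the $6$ and the $8$. Your route, using $\Var{X\mid Y=y}\le\Expi{a}{X^2\mid y}$ and the \emph{same} second-moment bound, would actually give the sharper constants $3$ and $4$.

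Two remarks on where your sketch is thinner than the paper. First, your proposed bound on $\Expi{}{X^2\mid Y=y}$ via $X=(Y-V)/a$ and ``control $\Expi{}{V^2\mid Y=y}$'' is circular: the conditional law of $V$ given $Y=y$ depends on the unknown prior exactly as much as that of $X$ does. The paper's device is different and worth noting: it splits the integral over $\{x:\dfi{a}{y|x}\le\dfi{a}{y}\}$ (contributing at most $\s_x^2$) and its complement, and on the complement the Gaussian form of $\dfi{a}{y|x}$ yields a deterministic bound $x^2<4y^2/a^2+2\s_x^2$ after a Jensen step on $-\ln\dfi{a}{y}$. Second, the regularity burden you flag (uniform control of the Taylor remainder of $\phi_b$) is real in your approach; the paper sidesteps differentiating the estimator and instead invokes only the classical second-order expansion of KL in terms of Fisher information, which is a standard off-the-shelf result. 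That is the main practical advantage of the Hellinger/KL route; the advantage of yours is the cleaner constant and the transparent Cram\'er--Rao-type interpretation.
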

        \begin{proof}
            Refer to Appendix \ref{app:upperbound:regret}.
        \end{proof}
        Lemma \ref{lemma:upperbound:regret} describes a bound \eqref{eqn:lemma:upperbound:regret} that
        comprises two  multiplicative terms. The first term $(\hat{a}-a)^2$ measures the
        channel estimation error. The second term is the weighted average of conditional
        Fisher information. Intuitively, this term  measures the amount of information
        that an observable random variable $X$ carries about unknown parameter $a$ conditioned on $Y$, assigning
        more weight to larger values of $Y$.

        \begin{corollary}
            If  $|\hat{a}-a| << 1$, and
            if $\fisher{X;a||Y}$ and $Y^2$ are uncorrelated, we obtain the simple
            bound
            \begin{align}
                \regret{\hat{a},a} \leq (\hat{a}-a)^2 (14 \s^2_x + 8 \frac{\s^2_v}{{a}^2}) \fisher{X;a|Y}
                \label{eqn:upperbound:regret:small:deviation:simple}
            \end{align}
            in which
                \begin{eqnarray}
                    \fisher{X;a|Y} \triangleq \Expi{}{\big(\nabla \ln \dfi{a}{X|Y} \big)^2}
                    \label{eqn:fisher:average}
                \end{eqnarray}
            is the average of ${\fisher{X;a||Y}}$ with respect to $Y$.\footnote{Lookout for the subtle notational difference between ${\fisher{X;a||Y}}$, a random variable,
            and $\fisher{X;a|Y}$, a scalar.}
        \label{cor:regret:approximation}
        \end{corollary}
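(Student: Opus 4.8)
The plan is to specialize the bound of Lemma~\ref{lemma:upperbound:regret} under the two stated hypotheses; no new machinery is needed. First, since $|\hat a - a| \ll 1$, the remainder $o(\hat a - a)^2$ is dominated by the explicit quadratic term, so for the purpose of the approximate bound it is dropped (equivalently, it can be absorbed by an arbitrarily small inflation of the constants). This leaves the task of evaluating $\Exp{\bigl(6\s^2_x + 8 Y^2/a^2\bigr)\,\fisher{X;a||Y}}$, where the expectation is over $Y$.

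Next I would use linearity of expectation to write this as $6\s^2_x\,\Exp{\fisher{X;a||Y}} + (8/a^2)\,\Exp{Y^2\,\fisher{X;a||Y}}$. The first summand equals $6\s^2_x\,\fisher{X;a|Y}$ by the definition in~\eqref{eqn:fisher:average}. For the second summand, the hypothesis that $\fisher{X;a||Y}$ and $Y^2$ are uncorrelated means precisely that $\Exp{Y^2\,\fisher{X;a||Y}} = \Exp{Y^2}\,\Exp{\fisher{X;a||Y}} = \Exp{Y^2}\,\fisher{X;a|Y}$, again using~\eqref{eqn:fisher:average}.

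The only genuine computation is $\Exp{Y^2}$. From the model~\eqref{eqn:dynamic}, $Y = aX + V$ with $X$ and $V$ independent and $V \sim \mcal N(0,\s^2_v)$ zero-mean, so $\Exp{Y^2} = a^2\Exp{X^2} + \Exp{V^2} = a^2 \s^2_x + \s^2_v$ (reading $\s^2_x = \Exp{X^2}$). Substituting, the second summand becomes $(8/a^2)(a^2\s^2_x + \s^2_v)\,\fisher{X;a|Y} = \bigl(8\s^2_x + 8\s^2_v/a^2\bigr)\fisher{X;a|Y}$, and adding the two contributions collects the coefficient $6\s^2_x + 8\s^2_x + 8\s^2_v/a^2 = 14\s^2_x + 8\s^2_v/a^2$, which is exactly~\eqref{eqn:upperbound:regret:small:deviation:simple}.

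There is no real obstacle here: the statement is a direct corollary. The only points that warrant a line of care are (i) making precise in what sense the $o(\hat a - a)^2$ term is discarded once $|\hat a - a| \ll 1$, and (ii) being consistent about whether $\s^2_x$ denotes the variance or the second moment of $X$ --- the arithmetic $6+8=14$ closes only because $\Exp{Y^2} = a^2\s^2_x + \s^2_v$, i.e.\ only if $\s^2_x$ is taken as $\Exp{X^2}$ (which agrees with the variance under the zero-mean input convention implicit throughout).
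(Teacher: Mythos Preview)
Your derivation is correct and is exactly the intended route: the paper states this as an immediate corollary of Lemma~\ref{lemma:upperbound:regret} without writing out a proof, and your steps---drop the $o(\hat a-a)^2$ term under $|\hat a-a|\ll 1$, split the expectation by linearity, factor $\Exp{Y^2\,\fisher{X;a||Y}}$ using the uncorrelatedness hypothesis, and compute $\Exp{Y^2}=a^2\s_x^2+\s_v^2$---are precisely what is implicit. Your closing remarks about the meaning of $\s_x^2$ and the handling of the $o(\cdot)$ term are the right caveats to flag.
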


        \subsection{Efficient Channel Estimation}
        Neither Eq. \eqref{eqn:lemma:upperbound:regret} nor Eq. \eqref{eqn:upperbound:regret:small:deviation:simple} depend on the channel estimation
        algorithm that estimates $a$. They simply relate small deviation between $\hat{a}$ and $a$ to
        absolute regret in estimating $X$. To incorporate the effect of channel estimation algorithm, we proceed as follows.

        As mentioned earlier, at time $n$, $\hat{a}$ is obtained through
        observation of $Y^{n-1}=(Y_i)_{i=1}^{n-1}$. In formal terms,
        \[\hat{a} = A_n(Y^{n-1})\]
        where $A=(A_1, A_2, \cdots)$ is a channel estimation algorithm in which $A_n\colon \mcal Y^{n-1} \to \mbb R^+$.
        \begin{lemma}
            Let $\mcal A$ denote the class of all unbiased channel estimation algorithms.
            If $\mcal A$ contains an {\em efficient estimator} \cite[p. 92]{Borovkov:MathematicalStatistics},
            the following holds true
            \begin{align}
                \nonumber \inf_{A\in \mcal A} \Exp{\regret{A_n(Y^{n-1}),  a}} \leq \qquad \qquad \qquad\\ \frac{1}{n-1}  \frac{\Exp{\left(6 \s^2_x + 8 \frac{Y^2}{{a}^2}\right)  \fisher{X;a||Y}}}{\fisher{Y;a}}
                \label{eqn:esterror:aregret:Cramer-Rao}
            \end{align}
            for sufficiently large values of $n$.\footnote{The expectation in the LHS is with respect to $Y^{n-1}$.}
            \label{lemma:regret:Cramer-Rao}
        \end{lemma}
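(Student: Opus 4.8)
The plan is to average the pointwise bound of Lemma~\ref{lemma:upperbound:regret} over the data $Y^{n-1}$ that drives the channel estimator, and then invoke the Cramér--Rao inequality for the mean square error of that estimator. Since $\mcal A$ is assumed to contain an efficient estimator, call it $A^\star$, the infimum on the left of \eqref{eqn:esterror:aregret:Cramer-Rao} is at most $\Exp{\regret{A^\star_n(Y^{n-1}),a}}$, so it suffices to control this quantity. Moreover, the bound \eqref{eqn:lemma:upperbound:regret} is monotone increasing in the channel estimation error $(\hat a-a)^2$, so minimizing mean square error is exactly what minimizes the derived bound; this justifies replacing the infimum by the value attained at $A^\star$ without loss at the level of the bound.

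First I would instantiate \eqref{eqn:lemma:upperbound:regret} with $\hat a=A^\star_n(Y^{n-1})$ and take expectation over $Y^{n-1}$. The key observation is that the weighting factor $\Exp{\left(6\s^2_x+8\frac{Y^2}{a^2}\right)\fisher{X;a||Y}}$ is an expectation over the single output symbol $Y=Y_n$ at the decoding instant, which by the i.i.d.\ assumption on $\{Y_n\}$ is independent of the estimation data $Y^{n-1}$. Consequently this factor is a constant with respect to the $Y^{n-1}$-averaging and pulls out, giving
\[
\Exp{\regret{A^\star_n(Y^{n-1}),a}}\le \Exp{(A^\star_n(Y^{n-1})-a)^2}\,\Exp{\Big(6\s^2_x+8\tfrac{Y^2}{a^2}\Big)\fisher{X;a||Y}}+\Exp{o\big((A^\star_n-a)^2\big)}.
\]

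Next I would evaluate $\Exp{(A^\star_n(Y^{n-1})-a)^2}$. Because $A^\star$ is unbiased, this mean square error equals its variance, and because it is efficient it meets the Cramér--Rao lower bound for estimating $a$ from $n-1$ i.i.d.\ observations, whose Fisher information is $(n-1)\fisher{Y;a}$; hence $\Exp{(A^\star_n-a)^2}=1/\big((n-1)\fisher{Y;a}\big)$. Substituting this into the display above produces precisely the leading term of \eqref{eqn:esterror:aregret:Cramer-Rao}.

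The main obstacle is the remainder $\Exp{o\big((A^\star_n-a)^2\big)}$: one must argue it is negligible relative to the leading term $1/\big((n-1)\fisher{Y;a}\big)$. Since $A^\star$ is consistent with $\Exp{(A^\star_n-a)^2}=O(1/n)$, for large $n$ the estimation error concentrates near zero, so the pointwise little-$o$ from Lemma~\ref{lemma:upperbound:regret} contributes $o\big(1/(n-1)\big)$ after averaging --- this step tacitly uses a uniform-integrability (or dominated-convergence) control so that the pointwise $o(\cdot)$ survives the expectation, which is the technical point that needs care and is what the phrase ``for sufficiently large values of $n$'' absorbs. Once this remainder is discarded, the stated inequality \eqref{eqn:esterror:aregret:Cramer-Rao} follows.
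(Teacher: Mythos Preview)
Your proposal is correct and follows essentially the same route as the paper's proof: instantiate Lemma~\ref{lemma:upperbound:regret} with the efficient estimator, average over $Y^{n-1}$ so that the $Y$-dependent weighting factor pulls out as a constant, and then replace $\Exp{(A^\star_n-a)^2}$ by $1/\big((n-1)\fisher{Y;a}\big)$ via Cram\'er--Rao and additivity of Fisher information for i.i.d.\ samples. If anything, your treatment of the $o(\cdot)$ remainder and the independence of the weighting factor from $Y^{n-1}$ is more explicit than the paper's, which simply asserts that for large $n$ the deviation $|A_n-a|$ is small with high probability and drops the remainder.
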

        \begin{proof}
            Refer to Appendix \ref{app:prf:regret:Cramer-Rao}.
        \end{proof}

    \section{Relative Regret}
        \subsection{Deviation Analysis}
        Let
        \begin{align}
            \rregret{\hat{a},a} = \Exp{\frac{(\phi_{\hat{a}}(Y)  - \phi_a(Y))^2}{\Expi{\hat{a}}{X^2|Y} + \Expi{{a}}{X^2|Y}}}
            \label{eqn:relative:regret}
        \end{align}
        denote the relative regret. The following lemma states and proves a simple upper-bound
        on  $ \rregret{\hat{a},a}$.
        \begin{lemma}
            For every $\hat{a}$, we have
            \begin{align}
                \rregret{\hat{a},a}  \leq & {(\hat{a}-a)^2} \;  \fisher{X;a|Y} + o(\hat{a}-a)^2
                \label{eqn:lemma:relative:regret}
            \end{align}
            where $\fisher{X;a|Y}$ is defined as Eq. \eqref{eqn:fisher:average}
            and denotes the conditional Fisher information of $X$ relative to $a$.
            \label{lemma:upperbound:relative:regret}
        \end{lemma}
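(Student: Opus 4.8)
The plan is to linearize the mismatched estimator in the channel gain and to bound the numerator of $\rregret{\hat a,a}$ by a conditional Cauchy--Schwarz estimate whose variance factor is exactly what the denominator of \eqref{eqn:relative:regret} provides; this is the same mechanism as in Lemma~\ref{lemma:upperbound:regret}, but the relative normalization removes the need to bound $\Expi{a}{X^2\mid Y}$ crudely, which is why the resulting bound carries no extra multiplicative factor. Write $\Delta=\hat a-a$. Under the regularity conditions that permit differentiation under the integral sign in $\phi_a(y)=\int x\,\dfi{a}{x|y}\,dx$, I would first record the first-order expansion $\phi_{\hat a}(Y)-\phi_a(Y)=\Delta\,\dif{a}\phi_a(Y)+o(\Delta)$, so that $(\phi_{\hat a}(Y)-\phi_a(Y))^2=\Delta^2(\dif{a}\phi_a(Y))^2+o(\Delta^2)$ for $\probdistn{a}$-almost every realization of $Y$.

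Next I would identify $\dif{a}\phi_a$ with a conditional covariance against the Fisher score. Because $\int \dfi{a}{x|y}\,dx\equiv 1$, the conditional mean of the score vanishes, $\Expi{a}{\nabla\ln\dfi{a}{X|Y}\mid Y}=0$, and hence
\begin{align}
 \dif{a}\phi_a(Y)=\Expi{a}{X\,\nabla\ln\dfi{a}{X|Y}\mid Y}=\Expi{a}{(X-\phi_a(Y))\,\nabla\ln\dfi{a}{X|Y}\mid Y}.
\end{align}
The conditional Cauchy--Schwarz inequality, followed by discarding the subtracted mean, then gives
\begin{align}
 \Big(\dif{a}\phi_a(Y)\Big)^2\;\le\;\Expi{a}{(X-\phi_a(Y))^2\mid Y}\;\fisher{X;a||Y}\;\le\;\Expi{a}{X^2\mid Y}\;\fisher{X;a||Y},
\end{align}
where $\fisher{X;a||Y}$ is the conditional Fisher information of \eqref{eqn:lemma:fisher:random}.

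Combining the two preceding steps and dividing by the denominator of \eqref{eqn:relative:regret}, which dominates $\Expi{a}{X^2\mid Y}$ since $\Expi{\hat a}{X^2\mid Y}\ge 0$, the $\Expi{a}{X^2\mid Y}$ factor cancels and the integrand of $\rregret{\hat a,a}$ is bounded above, pointwise in $Y$, by $\Delta^2\,\fisher{X;a||Y}$ plus a normalized remainder. Taking the expectation over $Y$ turns the leading term into $\Delta^2\,\Exp{\fisher{X;a||Y}}=\Delta^2\,\fisher{X;a|Y}$ by the definition \eqref{eqn:fisher:average}, yielding the claimed bound once the remainder is shown to be $o(\Delta^2)$.

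The main obstacle will be precisely that last point: upgrading the pointwise $o(\Delta^2)$ in $Y$ to an $o(\Delta^2)$ \emph{after} integration, since a pointwise second-order expansion plus dominated convergence only gives $o(1)$. This is where the relative normalization is essential: by Jensen's inequality $(\phi_{\hat a}(Y)-\phi_a(Y))^2\le 2\Expi{\hat a}{X^2\mid Y}+2\Expi{a}{X^2\mid Y}$, so the integrand of \eqref{eqn:relative:regret} is uniformly bounded by $2$, and the same Cauchy--Schwarz estimate bounds $\Delta^{-2}$ times the normalized remainder by the integrable dominating function $2+\fisher{X;a||Y}$ (finiteness of $\fisher{X;a|Y}$ being the standing assumption). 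A dominated-convergence argument applied to the normalized remainder then delivers $o(\Delta^2)$ in expectation, which closes the proof. Apart from this and the differentiation-under-the-integral step, everything reduces to the one-line conditional Cauchy--Schwarz inequality together with the observation that its variance factor never exceeds the denominator.
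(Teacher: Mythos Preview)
Your argument is correct and reaches the same bound with the same leading constant, but the route differs from the paper's. The paper never Taylor-expands $\phi_{\hat a}$; it instead invokes Proposition~\ref{lemma:upperbound:pointwise:regret}, a non-asymptotic inequality valid for \emph{every} $\hat a$, obtained through a Cauchy--Schwarz/Hellinger/Kullback--Leibler chain: $(\phi_{\hat a}(y)-\phi_a(y))^2\le 2\big(\Expi{\hat a}{X^2|y}+\Expi{a}{X^2|y}\big)\kullback{\probdistn{\hat a|y}}{\probdistn{a|y}}$. Dividing by the denominator of \eqref{eqn:relative:regret} cancels the second-moment factor exactly, leaving $2\kullback{\probdistn{\hat a|y}}{\probdistn{a|y}}$; only then is the local expansion $\kullback{\probdistn{\hat a|y}}{\probdistn{a|y}}=\tfrac{1}{2}(\hat a-a)^2\fisher{X;a||Y}+o(\hat a-a)^2$ substituted and the expectation over $Y$ taken. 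Your route instead linearizes $\phi_{\hat a}$ first and applies the conditional Cauchy--Schwarz inequality directly to the score representation of $\partial_a\phi_a(Y)$; this is more elementary (no Hellinger distance) and even yields the tighter intermediate factor $\Expi{a}{(X-\phi_a(Y))^2|Y}$ before you relax it to $\Expi{a}{X^2|Y}$. What the paper's route buys is a clean non-local bound in terms of KL divergence, so that all remainder bookkeeping is pushed into the standard KL--Fisher relationship rather than into a Taylor expansion of the estimator. Both proofs are equally informal about interchanging the $o(\hat a-a)^2$ remainder with the outer expectation over $Y$; the paper simply asserts it, whereas you at least sketch a domination argument.
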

        \begin{proof}
            See  Appendix \ref{app:prf:lemma:upperbound:relative:regret}.
        \end{proof}
        Eq. \eqref{eqn:lemma:relative:regret}
        results to a simple upper-bound on the relative regret for small deviations between $\hat{a}$
        and $a$.

        \subsection{Efficient Channel Estimation}

        Similar to the case for absolute regret, we now state the following
        result.
        \begin{lemma}
            Let $\mcal A$ denote the class of all unbiased estimation algorithms.
            If $\mcal A$ contains an {\em efficient estimator},
            the following holds true
            \begin{eqnarray}
                \inf_{A\in \mcal A}\Exp{\rregret{A_n(Y^{n-1}), a}} \leq  \frac{1}{n-1} \; \frac{ \fisher{X;a|Y}}{\fisher{Y;a}}
                \label{eqn:rregret:cramer-rao}
            \end{eqnarray}
            for sufficiently large values of $n$.
            \label{lemma:rregret:Cramer-Rao}
        \end{lemma}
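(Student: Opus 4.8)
The plan is to combine Lemma~\ref{lemma:upperbound:relative:regret} with the Cramér--Rao bound in exactly the same way that Lemma~\ref{lemma:regret:Cramer-Rao} is obtained from Lemma~\ref{lemma:upperbound:regret}. Starting from the deviation bound \eqref{eqn:lemma:relative:regret}, substitute $\hat a = A_n(Y^{n-1})$ and take expectation over $Y^{n-1}$. Since the factor $\fisher{X;a|Y}$ in \eqref{eqn:lemma:relative:regret} is a deterministic scalar (it has already been averaged over $Y$) and $A_n(Y^{n-1})$ depends only on the training block $Y^{n-1}$, which is independent of the fresh observation $Y=Y_n$, the expectation factors cleanly: $\Exp{\rregret{A_n(Y^{n-1}),a}} \le \Exp{(A_n(Y^{n-1})-a)^2}\,\fisher{X;a|Y} + \Exp{o(A_n(Y^{n-1})-a)^2}$. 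The first factor is precisely the mean square error of the channel estimator $A_n$ based on $n-1$ i.i.d. samples $Y_1,\dots,Y_{n-1}$.

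Next I would invoke the Cramér--Rao bound for unbiased estimators of $a$ from $n-1$ i.i.d. observations: $\Exp{(A_n(Y^{n-1})-a)^2} \ge \frac{1}{(n-1)\fisher{Y;a}}$, with equality attained by the efficient estimator whose existence is assumed in the hypothesis. Taking the infimum over $A \in \mcal A$ therefore yields $\inf_{A}\Exp{(A_n(Y^{n-1})-a)^2} = \frac{1}{(n-1)\fisher{Y;a}}$, and substituting this into the factored bound gives the claimed inequality $\inf_{A}\Exp{\rregret{A_n(Y^{n-1}),a}} \le \frac{1}{n-1}\cdot\frac{\fisher{X;a|Y}}{\fisher{Y;a}}$, modulo the remainder term.

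The main obstacle — as in the proof of Lemma~\ref{lemma:regret:Cramer-Rao} — is handling the $o(\hat a - a)^2$ remainder after taking expectations and passing to the infimum. One needs to argue that for the efficient estimator the estimation error concentrates, so that $\Exp{o(A_n(Y^{n-1})-a)^2}$ is itself $o\!\left(\frac{1}{n-1}\right)$ as $n\to\infty$; this is why the statement is qualified by ``for sufficiently large values of $n$.'' Concretely, I would use that the efficient estimator is consistent with mean square error $O(1/(n-1))$, control the tail of $A_n(Y^{n-1})-a$ (e.g. via asymptotic normality or a uniform integrability argument), and conclude that the remainder is asymptotically negligible relative to the leading $1/(n-1)$ term, so it can be absorbed. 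Everything else is a direct substitution and a one-line application of the Cramér--Rao inequality; I would simply reference Appendix~\ref{app:prf:regret:Cramer-Rao} for the shared remainder-term bookkeeping rather than repeat it.
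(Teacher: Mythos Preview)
Your proposal is correct and follows exactly the paper's approach: the paper itself simply states that the proof is essentially the same as that of Lemma~\ref{lemma:regret:Cramer-Rao}, i.e., plug $\hat a=A_n(Y^{n-1})$ into the deviation bound of Lemma~\ref{lemma:upperbound:relative:regret}, take expectations, and invoke the Cram\'er--Rao bound (with equality for the assumed efficient estimator and additivity of Fisher information over the $n-1$ i.i.d.\ samples), absorbing the $o(\hat a-a)^2$ remainder via the ``sufficiently large $n$'' qualifier. Your explicit remark that $\fisher{X;a|Y}$ is already a deterministic scalar, so the expectation factors, is the one simplification relative to the absolute-regret case, and your plan to defer the remainder bookkeeping to Appendix~\ref{app:prf:regret:Cramer-Rao} matches the paper's treatment.
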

        \begin{proof}
            The proof of this lemma is essentially the same as the proof of~Lemma~\ref{lemma:regret:Cramer-Rao}.
        \end{proof}
        Lemma \ref{lemma:rregret:Cramer-Rao} describes a bound on the expected relative regret, should
        an efficient estimator be used. This bound determines the smallest upper-bound on average relative regret, when
        sufficiently good unbiased channel estimators are used.

        \subsection{Regret Scalar}
        The constant value in the RHS of Eq. \eqref{eqn:rregret:cramer-rao} worths attention.
        It does not change with respect to $n$, and as $n \to \infty$, it becomes the sole
        scalar that determines the level of relative regret.
        We define this quantity as the {\em regret scalar} and denote it by
        \begin{eqnarray}
            \r(a)=\frac{\fisher{X;a|Y}}{\fisher{Y;a}}. 
            \label{eqn:regret coefficient:fisher}
        \end{eqnarray}

        \begin{lemma}
            For every zero-mean input distribution $\probdist{}{X}$, the following trade-off holds true between
            regret scalar and output Fisher information
            \begin{eqnarray}
                (\r(a)+1) {\fisher{Y;a}} = \frac{\s^2_x}{\s^2_v}.
                \label{eqn:trade-off}
            \end{eqnarray}
            \label{lemma:tradeoff}
        \end{lemma}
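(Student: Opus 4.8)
The plan is to observe that, by the definition \eqref{eqn:regret coefficient:fisher} of the regret scalar, $(\r(a)+1)\fisher{Y;a} = \fisher{X;a|Y} + \fisher{Y;a}$, and then to identify this sum with the Fisher information that the \emph{pair} $(X,Y)$ carries about $a$, a quantity that the Gaussian channel model \eqref{eqn:dynamic} lets us evaluate in closed form. Throughout, $\nabla = \partial/\partial a$, $\fisher{X;a|Y} = \Exp{(\nabla \ln \dfi{a}{X|Y})^2}$ is as in \eqref{eqn:fisher:average}, $\fisher{Y;a} = \Exp{(\nabla \ln \dfi{a}{Y})^2}$ is the output Fisher information, and I will write $\fisher{X,Y;a} \triangleq \Exp{(\nabla \ln \dfi{a}{X,Y})^2}$, with $\dfi{a}{X,Y}$ the density of $\probdistn{a}$.

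The first step is an additive (chain-rule) decomposition of Fisher information: $\fisher{X,Y;a} = \fisher{X;a|Y} + \fisher{Y;a}$. Factoring $\dfi{a}{X,Y} = \dfi{a}{X|Y}\,\dfi{a}{Y}$ and differentiating the logarithm gives $\nabla \ln \dfi{a}{X,Y} = \nabla \ln \dfi{a}{X|Y} + \nabla \ln \dfi{a}{Y}$. Squaring and taking expectation yields the two Fisher-information terms together with the cross term $2\,\Exp{\nabla \ln \dfi{a}{X|Y}\cdot \nabla \ln \dfi{a}{Y}}$. Conditioning the cross term on $Y$ and using the standard score identity $\Expi{}{\nabla \ln \dfi{a}{X|Y}\,\big|\,Y} = \int \nabla \dfi{a}{x|Y}\,dx = \nabla \int \dfi{a}{x|Y}\,dx = 0$ shows that the cross term vanishes, which establishes the decomposition.

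The second step computes $\fisher{X,Y;a}$ directly, this time from the \emph{other} factorization $\dfi{a}{X,Y} = \pdfn{X}\,\dfi{a}{Y|X}$, where $\pdfn{X}$ is independent of $a$ and $\dfi{a}{Y|X}$ is the $\mcal N(aX,\s^2_v)$ density. Hence $\nabla \ln \dfi{a}{X,Y} = \nabla\!\big[-(Y-aX)^2/(2\s^2_v)\big] = (Y-aX)X/\s^2_v = VX/\s^2_v$, where $V = Y-aX \sim \mcal N(0,\s^2_v)$ is, by \eqref{eqn:dynamic}, independent of $X$. Therefore $\fisher{X,Y;a} = \Exp{V^2X^2}/\s^4_v = \Exp{V^2}\Exp{X^2}/\s^4_v = \Exp{X^2}/\s^2_v$, and the zero-mean hypothesis gives $\Exp{X^2} = \s^2_x$, so $\fisher{X,Y;a} = \s^2_x/\s^2_v$. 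Combining with Step 1, $(\r(a)+1)\fisher{Y;a} = \fisher{X;a|Y}+\fisher{Y;a} = \fisher{X,Y;a} = \s^2_x/\s^2_v$, which is \eqref{eqn:trade-off}.

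The only delicate point is justifying the interchange of $\nabla$ with the $x$-integration used to kill the cross term (equivalently, that the conditional score has zero conditional mean); this is the same regularity already implicitly assumed in defining the conditional Fisher information \eqref{eqn:fisher:average}, and everything else reduces to an elementary second-moment computation for the Gaussian channel. I would expect this regularity bookkeeping, rather than any substantive calculation, to be the main thing to state carefully.
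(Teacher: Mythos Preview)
Your proof is correct and follows essentially the same approach as the paper: both rely on the Fisher-information chain rule to write $(\r(a)+1)\fisher{Y;a}=\fisher{X;a|Y}+\fisher{Y;a}=\fisher{Y;a|X}$ (using that $X$ carries no Fisher information about $a$), and then evaluate $\fisher{Y;a|X}=\s^2_x/\s^2_v$ from the Gaussian conditional density. The only difference is cosmetic: you route through the joint quantity $\fisher{X,Y;a}$ and spell out the vanishing of the cross term, whereas the paper invokes the chain rule directly and passes straight to $\fisher{Y;a|X}$.
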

        \begin{proof}
            See Appendix \ref{prf:lemma:tradeoff}.
        \end{proof}
        In Eq. \eqref{eqn:trade-off}, the RHS is the signal-to-noise ratio that is independent of $a$. Thus,
        Eq.~\eqref{eqn:trade-off} presents a simple product trade-off
        relationship between $\r(a)$ and $\fisher{Y;a}$. It suggest that the higher the Fisher information, the smaller the regret scalar,
        and vice-versa. The following example explicates this trade-off.

%
%

    \begin{example}[Gaussian Input]
        Assume $X_n\sim\mcal N(0,\s^2_x)$ and
        $V_n\sim\mcal N(0,\s^2_v)$ are  i.i.d. implying that
        $Y_n\sim\mcal N(0,a^2\s^2_x+\s^2_v)$ and $Y_n|x_n \sim \mcal N(a x_n, \s^2_v )$.
        With perfect knowledge of $a$, the ideal estimator for $X$ given $Y=y$ is
        \begin{align}
            \phi_a(y) = \frac{a\s^2_x}{a^2\s^2_x+\s^2_v} y.
        \end{align}
        The MMSE error resulting from this estimator is
        \begin{align}
            {\Expi{}{(X-\phi_a(Y))^2}}=\frac{\s^2_x \s^2_v}{a^2 \s^2_x +\s^2_v}.
        \end{align}
        A mismatch estimator for $\hat{a}$ is
        \begin{align}
            \phi_{\hat{a}}(y) = \frac{\hat{a}\s^2_x}{\hat{a}^2\s^2_x+\s^2_v} y.
        \end{align}
        We have
        \begin{eqnarray}
            \fisher{Y;a|X} = \frac{\s^2_x}{\s^2_v}
        \end{eqnarray}
         and
        \begin{align}
        \fisher{Y;a} =  \frac{2a^2\s^4_x}{(a^2\s^2_x+\s^2_v)^2}.
        \end{align}
        Thus,
        \begin{eqnarray}
            \r(a) = \frac{1}{2} \left(\frac{a^2\s^2_x}{\s^2_v} + \frac{\s^2_v}{a^2\s^2_x}\right)
        \end{eqnarray}
        Figure~\ref{fig:regret} depicts the behavior of $\r(a)$ and $\fisher{Y;a}$ with respect to
        $a$. The ${\rm SNR}=\frac{\s^2_x}{\s^2_v}=10 \text{ dB}$ and at $a=.35$, the minimum regret
        scalar coincides with maximum Fisher information.
            \begin{figure}[t]
                \begin{center}
                    \psfrag{S}[Bc][B1][\figfontsmall][0]{${\rm SNR} = \frac{\s^2_x}{\s^2_v}$}
                    \psfrag{Min}[Bc][B1][\figfontsmall][0]{Minimum Regret}
                    \psfrag{Max}[Bc][B1][\figfontsmall][0]{Maximum Fisher}
                    \psfrag{a}[Bc][B1][\figfontlarge][0]{$a$}
                    \includegraphics[width=.9\columnwidth]{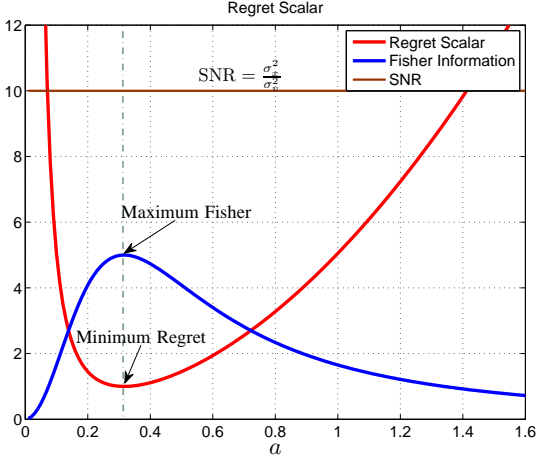}
                \end{center}
                \caption{Figure illustrates the multiplicative trade-off between regret scalar and Fisher information.
                Smaller Fisher information results to larger regret scalar and vice versa. The ${\rm SNR}=10 \text{ dB}$ and the minimum
                regret scalar is coincident with maximum Fisher information.}
                \label{fig:regret}
            \end{figure}

    \end{example}

    \section{Recap and Conclusion}
        We considered the problem of estimating the input signal  from the output of an additive white Gaussian
        noise  channel subject to parametric uncertainty. Namely, the channel gain is fixed, but unknown.
        In treating the problem, we considered an estimation process that consists of two algorithms: a blind channel
        estimator and a mismatched MMSE estimator to estimate the input.
        We studied the  regret that is raised as a result of mismatch estimation.
        Simple upper-bounds on both absolute and relative regrets were presented. These bounds
        provide useful tools in assessing deviation in estimating the
        input when there exists a small deviation in channel gain estimation.
        The bounds are simple and expressed in terms of the Fisher information. This makes them
        more intuitive and could potentially bridge to other known results in the literature.

        We also quantified regret for unbiased, { efficient} channel estimators. Using Caramer-Rao bound,
        we derived a simple trade-off between Fisher information and relative regret.
        This trade-off expresses that the  product of a certain function of relative
        regret and the Fisher information is equivalent to the signal-to-noise ratio, independent of the input distribution.
        The trade-off suggests that the higher the Fisher information, the smaller the expected relative regret.

        This work is our initial attempt to shed light on information-theoretic
        limits of blind deconvolution and dereverberation systems. We are currently working on generalization of
        these results to these applications.

\appendix

    \section{Proofs}

    \subsection{Proof of Lemma \ref{lemma:upperbound:regret}}
    \label{app:upperbound:regret}
        To derive an upperbound on absolute regret, we first state
        and prove the following results.
        \begin{proposition}
            For every $\hat{a}$ and $y \in \mcal Y$, we have
            \begin{align}
                \nonumber (\phi_{\hat{a}}(y) & - \phi_a(y))^2  \leq \\ & 2 (\Expi{\hat{a}}{X^2|y} + \Expi{{a}}{X^2|y})  \kullback{\probdistn{\hat{a}|y}}{\probdistn{{a}|y}}.
                \label{eqn:lemma:pointwise:regret}
            \end{align}
            \label{lemma:upperbound:pointwise:regret}
        \end{proposition}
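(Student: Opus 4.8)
The plan is to reduce the statement to the classical comparison between the Hellinger distance and the Kullback--Leibler divergence of the two conditional laws $\probdistn{\hat a|y}$ and $\probdistn{a|y}$. Fix $y$ and write $p(x)=\dfi{\hat a}{x|y}$ and $q(x)=\dfi{a}{x|y}$ for the corresponding conditional densities (these exist because the additive Gaussian noise makes every $\probdistn{a|y}$ absolutely continuous). Since $\phi_{\hat a}(y)-\phi_a(y)=\Expi{\hat a}{X|y}-\Expi{a}{X|y}=\int x\,\bigl(p(x)-q(x)\bigr)\,dx$, the first step is the factorization $p-q=(\sqrt p-\sqrt q)(\sqrt p+\sqrt q)$ followed by the Cauchy--Schwarz inequality, which gives
\[
\bigl(\phi_{\hat a}(y)-\phi_a(y)\bigr)^2\le\Bigl(\int x^2(\sqrt p+\sqrt q)^2\,dx\Bigr)\Bigl(\int(\sqrt p-\sqrt q)^2\,dx\Bigr).
\]

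For the first factor I would use the pointwise bound $(\sqrt p+\sqrt q)^2\le 2(p+q)$ (equivalently AM--GM), so that $\int x^2(\sqrt p+\sqrt q)^2\,dx\le 2\bigl(\Expi{\hat a}{X^2|y}+\Expi{a}{X^2|y}\bigr)$. For the second factor, observe that $\int(\sqrt p-\sqrt q)^2\,dx=2\bigl(1-\int\sqrt{pq}\,dx\bigr)$, which is twice the squared Hellinger distance; and this is dominated by the KL divergence because, applying Jensen's inequality to the convex function $-\ln$,
\[
\kullback{\probdistn{\hat a|y}}{\probdistn{a|y}}=-2\!\int\! p\,\ln\!\sqrt{q/p}\;dx\ \ge\ -2\ln\!\int\!\sqrt{pq}\,dx\ \ge\ 2\Bigl(1-\int\sqrt{pq}\,dx\Bigr),
\]
where the last inequality is $-\ln t\ge 1-t$. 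Hence $\int(\sqrt p-\sqrt q)^2\,dx\le\kullback{\probdistn{\hat a|y}}{\probdistn{a|y}}$, and substituting the two bounds into the Cauchy--Schwarz estimate yields \eqref{eqn:lemma:pointwise:regret} exactly, with the factor $2$ coming from the moment bound.

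The only point requiring a word of care is integrability: the Cauchy--Schwarz step tacitly assumes $\int x^2(\sqrt p+\sqrt q)^2\,dx<\infty$, i.e.\ that $\Expi{\hat a}{X^2|y}$ and $\Expi{a}{X^2|y}$ are finite --- but those are precisely the quantities multiplying the divergence on the right-hand side, so if either is infinite the asserted bound is trivially true and there is nothing to prove. I do not anticipate any real obstacle here: once the functional $\phi_{\hat a}(y)-\phi_a(y)$ is written as an integral against $p-q$, the proposition is just the Hellinger--KL comparison specialized to the conditional-mean functional, packaged with the elementary inequalities $(\sqrt p+\sqrt q)^2\le 2(p+q)$ and $-\ln t\ge 1-t$.
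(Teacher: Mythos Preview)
Your argument is correct and essentially identical to the paper's: both factor $p-q=(\sqrt p-\sqrt q)(\sqrt p+\sqrt q)$, apply Cauchy--Schwarz, bound the moment factor via $(\sqrt p+\sqrt q)^2\le 2(p+q)$, and bound the Hellinger term by the Kullback--Leibler divergence. The only cosmetic differences are that the paper phrases things with a dominating measure $Q$ and cites the Hellinger--KL inequality from Shiryaev, whereas you work directly with densities and supply the short Jensen/$-\ln t\ge 1-t$ proof of that inequality yourself.
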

        \begin{proof}
            By definition, we have
            \begin{align*}
                \left(\phi_{\hat{a}}(y)- \phi_a(y)\right)^2  &=  \left(\int x \big( \frac{d P_{\hat{a}|y} }{ d Q} - \frac{d P_{{a}|y} }{ d Q} \big) { d Q}  \right)^2
            \end{align*}
            for every probability measure $Q$ such that $P_{a|y} \ll Q$ and $P_{\hat{a}|y} \ll Q$.
            By Cauchy Schwartz inequality, we obtain
            \begin{align}
            \nonumber (\phi_{\hat{a}}(y)- \phi_a(y))^2  & \leq   \int x^2 \left(\sqrt{\frac{d P_{\hat{a}|y} }{ d Q}} + \sqrt{\frac{d P_{{a}|y} }{ d Q}} \right)^2 dQ \\
                        & .  \int  \left(\sqrt{\frac{d P_{\hat{a}|y} }{ d Q}} - \sqrt{\frac{d P_{{a}|y} }{ d Q}} \right)^2 dQ 
                \label{eqn:cauchy}
            \end{align}
            By inequality $(a+b)^2 \leq 2(a^2+b^2)$, one can show that the first term in the RHS of the above inequality is smaller than or equal to
            \[2 (\Expi{a}{X^2|y} + \Expi{\hat{a}}{X^2|y}).\]
            The second term in the RHS of inequality \eqref{eqn:cauchy}  is known as {\em Kakutani-Hellinger} distance  between $\probdist{{a}}{X|y}$ and $\probdist{\hat{a}}{X|y}$,
            denoted by \cite[p. 363]{Shiryaev:Probability}
            \begin{align*}
                r^2(\probdistn{\hat{a}|y}, \probdistn{{a}|y})=  \frac{1}{2}\int  \left(\sqrt{\frac{d P_{\hat{ a}|y} }{ d Q}} - \sqrt{\frac{d P_{{a}|y} }{ d Q}} \right)^2 dQ.
            \end{align*}
            Moreover, we know of the following inequality between Kakutani-Hellinger distance and Kullback-Leibler distance  \cite[p. 369]{Shiryaev:Probability}
            \begin{align*}
                2 r^2(\probdistn{\hat{a}|y}, \probdistn{{a}|y}) \leq \kullback{\probdistn{\hat{a}|y}}{\probdistn{{a}|y}}.
            \end{align*}
            Substituting in \eqref{eqn:cauchy}, we obtain Eq. \eqref{eqn:lemma:pointwise:regret}.
        \end{proof}

        \begin{proposition}
            For every ${a}$ and $y \in \mcal Y $, the following inequality holds true
            \begin{align}
                \Expi{{a}}{X^2|y}
                                    &\leq 3 \s^2_x + 4 \frac{y^2}{{a}^2}.
                \label{eqn:lemma:conditional:second:moment}
           \end{align}
            \label{lemma:upperbound:conditional:second:moment}
        \end{proposition}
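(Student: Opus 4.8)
The plan is to pass to the scaled variable $Z=aX$, for which $Y=Z+V$ with $V\sim\mcal N(0,\s^2_v)$ independent of $Z$, $\Exp{Z^2}=a^2\s^2_x$, and $\Exp{Z}=0$ for a zero-mean input. Since $\Expi{a}{X^2|y}=a^{-2}\,\Expi{a}{Z^2|y}$ and $Y$ is a Gaussian observation of $Z$, the posterior of $Z$ given $Y=y$ is absolutely continuous with respect to its prior, so
\[
   \Expi{a}{Z^2|y}=\frac{\int z^2\,g(y-z)\,\probdistn{Z}(dz)}{\int g(y-z)\,\probdistn{Z}(dz)},\qquad g(u):=\frac{1}{\sqrt{2\pi}\,\s_v}\,\expn{-u^2/(2\s^2_v)}.
\]
It thus suffices to prove $\Expi{a}{Z^2|y}\le 3a^2\s^2_x+4y^2$ (the argument will in fact give $2a^2\s^2_x+4y^2$, so the constant $3$ carries slack).

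I would first apply the elementary bound $z^2\le 2(z-y)^2+2y^2$ to the numerator; after dividing by $\int g(y-z)\,\probdistn{Z}(dz)$ this reduces the task to controlling the posterior dispersion about $y$, i.e.\ $\Expi{a}{(Z-y)^2|y}$. The key point is that this is no larger than the \emph{prior} dispersion $\Exp{(Z-y)^2}$, which follows from a Chebyshev/FKG-type correlation inequality: writing $t:=(z-y)^2$, the Gaussian weight $z\mapsto g(y-z)$ is a nonincreasing function of $t$ while $z\mapsto(z-y)^2$ is nondecreasing in $t$, so for i.i.d.\ copies $Z,Z'\sim\probdistn{Z}$ the product $\big((Z-y)^2-(Z'-y)^2\big)\big(g(y-Z)-g(y-Z')\big)$ is almost surely nonpositive; taking expectations, the independent cross terms factor and one obtains
\[
   \int (z-y)^2\, g(y-z)\,\probdistn{Z}(dz)\ \le\ \Big(\int (z-y)^2\,\probdistn{Z}(dz)\Big)\Big(\int g(y-z)\,\probdistn{Z}(dz)\Big),
\]
that is, $\Expi{a}{(Z-y)^2|y}\le\Exp{(Z-y)^2}$.

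Assembling: since $\Exp{Z}=0$ we have $\Exp{(Z-y)^2}=\Exp{Z^2}+y^2=a^2\s^2_x+y^2$, hence $\Expi{a}{Z^2|y}\le 2(a^2\s^2_x+y^2)+2y^2=2a^2\s^2_x+4y^2$, and dividing by $a^2$ yields $\Expi{a}{X^2|y}\le 2\s^2_x+4y^2/a^2\le 3\s^2_x+4y^2/a^2$. The step I expect to be the real obstacle is precisely this dispersion bound. The tempting elementary alternative --- split $X^2$ over $\{|X|\le 2|y|/a\}$ and $\{|X|>2|y|/a\}$, bound the first by $4y^2/a^2$, and use $(y-aX)^2\ge a^2X^2/4$ on the second --- does bound each piece of the numerator, but it throws away the contribution of the normalizer $\int g(y-z)\,\probdistn{Z}(dz)$, which can be arbitrarily small (e.g.\ when $\probdistn{X}$ is concentrated far from $y/a$), and consequently only delivers a weaker constant; the correlation inequality is exactly what lets the normalizer cancel cleanly.
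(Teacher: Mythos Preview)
Your argument is correct and takes a genuinely different route from the paper's. The paper does not use a correlation inequality; instead it splits the posterior integral over the two regions $\{x:\dfi{a}{y|x}\le\dfi{a}{y}\}$ and $\{x:\dfi{a}{y|x}>\dfi{a}{y}\}$. On the first region the likelihood ratio is at most $1$, giving the contribution $\Exp{X^2}=\s_x^2$. On the second region the inequality $\dfi{a}{y|x}>\dfi{a}{y}$ is rewritten as $(y-ax)^2<-2\s_v^2\ln(\sqrt{2\pi}\,\s_v\,\dfi{a}{y})$; Jensen's inequality applied to $-\ln\dfi{a}{y}=-\ln\int e^{-(y-ax)^2/(2\s_v^2)}\dfi{}{x}\,dx$ then yields the pointwise bound $x^2<4y^2/a^2+2\s_x^2$, and integrating against the (sub-probability) posterior on that region gives the second contribution. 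Summing produces exactly $3\s_x^2+4y^2/a^2$.

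Your approach is cleaner: the observation that $g(y-z)$ is a decreasing function of $(z-y)^2$ immediately gives $\Cov{(Z-y)^2}{g(y-Z)}\le 0$ and hence $\Expi{a}{(Z-y)^2|y}\le\Exp{(Z-y)^2}$, after which the elementary bound $z^2\le 2(z-y)^2+2y^2$ finishes with the sharper constant $2\s_x^2$ in place of $3\s_x^2$. What the paper's decomposition buys is that it makes explicit how the normalizer $\dfi{a}{y}$ is controlled (via Jensen), whereas your FKG/Chebyshev step hides this in the covariance sign. Note, incidentally, that the ``tempting elementary alternative'' you criticize is not the paper's argument: the paper's split is on the level set of the likelihood ratio, not on $\{|X|\gtrless 2|y|/a\}$, and it does handle the normalizer correctly. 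Both proofs tacitly use $\Exp{X}=0$ when identifying $\Exp{(Z-y)^2}$ (your proof) or $\int(y-ax)^2\dfi{}{x}\,dx$ (the paper's) with $a^2\s_x^2+y^2$.
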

        \begin{proof}
        Let $\dfi{{a}}{y|x}$ and $\dfi{{a}}{y}$ denote the conditional and marginal densities for
        $\probdist{{a}}{X,Y}$. Then,
        \begin{align}
            \nonumber \Expi{{a}}{X^2|y} &= \int x^2 \frac{\dfi{{a}}{y|x} }{\dfi{{a}}{y}} \dfi{}{x}dx\\
            \nonumber     & =\int_{x: \dfi{{a}}{y|x}\leq {\dfi{{a}}{y}}} + \int_{x: \dfi{{a}}{y|x}> {\dfi{{a}}{y}}}\\
                                &\leq \Expi{}{X^2} + \int_{x: \dfi{{a}}{y|x}> {\dfi{{a}}{y}}}
            \label{eqn:expectation:break}
        \end{align}
        To simplify the second term, we substitute $x^2$ by the inequality that is derived as follows
        \begin{align*}
            \dfi{{a}}{y|x} > &{\dfi{{a}}{y}} \Rightarrow\\
            (y-{a}x)^2 &< -2 \s^2_v \ln \left(\sqrt{2 \pi} \s_v \dfi{{a}}{y} \right). 
        \end{align*}
        Taking the square roots, we obtain
        \begin{align*}
            \abs{y-{a}x} &< \sqrt{-2 \s^2_v \ln \left(\sqrt{2 \pi} \s_v \dfi{{a}}{y} \right)} \Rightarrow\\
            \abs{{a}x} &< \abs{y}+ \sqrt{-2 \s^2_v \ln \left(\sqrt{2 \pi} \s_v \dfi{{a}}{y} \right)}.
        \end{align*}
        Taking the square of both sides of the previous inequality and using the inequality  $(a+b)^2 \leq 2(a^2+b^2)$, we obtain
        \begin{align*}
            x^2 &< 2\frac{y^2}{{a}^2}-4 \frac{\s^2_v}{{a}^2} \ln \left(\sqrt{2 \pi} \s_v \dfi{{a}}{y} \right) \Rightarrow\\
            x^2 &< 2\frac{y^2}{{a}^2}+ 4 \frac{\s^2_v}{{a}^2}  \int \frac{(y-{a}x)^2}{2 \s^2_v} \dfi{}{x} dx  \Rightarrow\\
            x^2 &< 4\frac{y^2}{{a}^2}+ 2 \s^2_x
        \end{align*}
        By substituting for $x^2$ in  the second term of the RHS of Eq. \eqref{eqn:expectation:break},
        we conclude Eq. \eqref{eqn:lemma:conditional:second:moment}.
        \end{proof}

        As a result of Propositions \ref{lemma:upperbound:pointwise:regret} and \ref{lemma:upperbound:conditional:second:moment}, we obtain
        \begin{align}
           \nonumber(\phi_{\hat{a}}(y)- \phi_a(y))^2 & \leq \\ & 2 (6 \s^2_x + 4 \frac{y^2}{\hat{a}^2}+4 \frac{y^2}{{a}^2}) \kullback{\probdistn{\hat{a}|y}}{\probdistn{{a}|y}}.
           \label{eqn:lemma:basic}
        \end{align}
        Moreover, the following equality is known between Kullback-Leibler distance and Fisher information \cite[p.55]{Kullback:InformationTheory}
        \begin{align}
            \nonumber \kullback{\probdistn{\hat{a}|y}}{\probdistn{{a}|y}} = \frac{(\hat{a}-a)^2}{2}& \fisher{X;a||Y=y}\\&+ o(\hat{a}-a)^2,
            \label{eqn:kulback:fisher}
        \end{align}
         where $ \fisher{X;a||Y=y}  \triangleq \Expi{}{\big(\nabla \ln \dfi{a}{X|Y}\big)^2|Y=y} $
        is the Fisher information of $X$ relative to $a$, conditioned on $Y=y$.
        Substitute Eq. \eqref{eqn:kulback:fisher} in Eq. \eqref{eqn:lemma:basic} and
        note that \[\frac{1}{\hat{a}^2}=\frac{1}{a^2}+o(\hat{a}-a)^2\]
        for $|\frac{\hat{a}-a}{a}|<<1$.
        Taking the expectation
        with respect to $Y$, we conclude the proof of Lemma \ref{lemma:upperbound:regret}.

    \subsection{Proof of Lemma \ref{lemma:regret:Cramer-Rao}}
        \label{app:prf:regret:Cramer-Rao}
        We know that $\hat{a}=A_n(Y^{n-1})$. For an unbiased estimator and for sufficiently large values of $n$, $|A_n(Y^{n-1})-a|<<1$ and
        \begin{align}
            \nonumber \regret{A_n(Y^{n-1}),a} & \leq (A_n(Y^{n-1})-a)^2 \\&\Exp{\left(6 \s^2_x + 8 \frac{Y^2}{{a}^2}\right) \fisher{X;a||Y}}.
            \label{eqn:upperbound:regret:small:deviation}
        \end{align}
        holds true with arbitrarily high probability. Taking the expectation
        of both sides of Eq. \eqref{eqn:upperbound:regret:small:deviation} with respect to $Y^{n-1}$, we obtain
        \begin{align*}
            \nonumber  & \Exp{\regret{A_n(Y^{n-1}), a}} \leq  \\
            &  \Exp{(A_n(Y^{n-1})-a)^2} \; \Exp{\left(6 \s^2_x + 8 \frac{Y^2}{{a}^2}\right)  \fisher{X;a||Y}}
        \end{align*}
        Take the infimum of both sides over $\mcal A$ and assume
        $\mcal A$ contains an {\em efficient estimator} \cite[p. 92]{Borovkov:MathematicalStatistics}.
        By definition an efficient estimator achieves the Cramer-Rao bound. This means
        \begin{align*}
            \Exp{(A_n(Y^{n-1})-a)^2} = \frac{1}{\fisher{Y^{n-1};a}}.
        \end{align*}
        Since $Y_n$ is i.i.d., by additivity of Fisher information \[\fisher{Y^{n-1};a} = (n-1) \fisher{Y;a}.\]
        As a result, we obtain
        \begin{align*}
            \nonumber \inf_{A\in \mcal A} \Exp{\regret{A_n(Y^{n-1}),  a}} \leq \qquad \qquad \qquad\\ \frac{1}{n-1}  \frac{\Exp{\left(6 \s^2_x + 8 \frac{Y^2}{{a}^2}\right)  \fisher{X;a||Y}}}{\fisher{Y;a}}.
        \end{align*}

    \subsection{Proof of Lemma \ref{lemma:upperbound:relative:regret}}
    \label{app:prf:lemma:upperbound:relative:regret}
        By Proposition \ref{lemma:upperbound:pointwise:regret}, we have
        \begin{align*}
           \frac{ (\phi_{\hat{a}}(y)- \phi_a(y))^2}{\Expi{\hat{a}}{X^2|y}+\Expi{a}{X^2|y}}   &\leq   2 \kullback{P_{\hat{a}|y}}{P_{a|y}}.
        \end{align*}
        Substituting from Eq. \eqref{eqn:kulback:fisher} and taking the average with respect to $Y$, we obtain
        \begin{align*}
             \rregret{\hat{a}, a}   \leq   (\hat{a}-a)^2  \Expi{}{\big(\nabla\ln \dfi{a}{X|Y}\big)^2}+ o(\hat{a}-a)^2,
        \end{align*}
        and conclude the proof.

   \subsection{Proof of Lemma \ref{lemma:tradeoff}}
        \label{prf:lemma:tradeoff}

        Since $X$ does not depend on $a$, $\fisher{X;a} = 0 $, and hence
        \begin{eqnarray*}
            \r(a)=\frac{\fisher{X;a|Y}}{\fisher{Y;a}} = \frac{\fisher{Y;a|X}}{\fisher{Y;a}}-1.
        \end{eqnarray*}
        Moreover, since the additive noise is Gaussian, the equality
        \begin{eqnarray*}
            \fisher{Y;a|X} = \frac{\s^2_x}{\s^2_v}
        \end{eqnarray*}
        holds true for every distribution $P(X)$ with zero mean. As a result, we
        obtain Eq. \eqref{eqn:trade-off}.

\vspace{10pt}
    \section*{Acknowledgement}
        The author is thankful to the anonymous reviewers whose
        comments and suggestions improved the presentation of the paper.
\vspace{10pt}

\end{document}